\let\@afterindentfalse\@afterindenttrue
\newtheorem{theorem}{Theorem}
\newtheorem{lemma}{Lemma}
\newtheorem{definition}{Definition}
\newtheorem{corollary}{Corollary}
\newcommand*{\ce}{\mathop{\mathrm{e}}\nolimits}
\newcommand*{\ci}{\mathop{\mathrm{i}}\nolimits}
\newcommand*{\id}{\mathop{\mathrm{id}}\nolimits}
\newcommand*{\Cov}{\mathop{\mathrm{Cov}}\nolimits}
\newcommand*{\qCov}{\mathop{\mathrm{qCov}}\nolimits}
\newcommand*{\gCov}{\mathop{\mathfrak{Cov}}\nolimits}
\newcommand*{\N}{\mathbb{N}}
\newcommand*{\R}{\mathbb{R}}
\newcommand*{\C}{\mathbb{C}}
\newcommand*{\Rp}{\mathbb{R}^{+}}
\newcommand*{\dint}{\mathop{\mathrm{d}}\nolimits}
\newcommand*{\Diag}{\mathop{\mathrm{Diag}}\nolimits}
\newcommand*{\Fop}{\mathcal{F}_{\mathrm{op}}}
\newcommand*{\cqre}[1]{\left[ {#1}_{0}\right]}
\newcommand*{\CM}{\mathcal{C_{M}}}
\newcommand*{\MN}{\mathcal{M}_{n}}
\newcommand*{\MNN}{\mathcal{M}_{n}^{1}} 
\newcommand*{\TMN}{M_{n,\mathrm{sa}}}
\newcommand*{\TMNN}{M_{n,\mathrm{sa}}^{(0)}}
\newcommand*{\Tr}{\mathop{\mathrm{Tr}}\nolimits}
\newcommand*{\Var}{\mathop{\mathrm{Var}}\nolimits}
\newcommand*{\scal}[2]{\left\langle #1, #2\right\rangle}
\newcommand*{\abs}[1]{\left\vert #1 \right\vert}
\newcommand*{\gz}[1]{\left( #1 \right)}
\newcommand*{\sz}[1]{\left\lbrack #1 \right\rbrack}
\newcommand*{\kz}[1]{\left\lbrace #1 \right\rbrace}
\begin{document}
\title{Refinement of Robertson-type uncertainty principles
with geometric interpretation
\thanks{keywords: uncertainty principle, quantum Fisher information;
          MSC: 62B10, 94A17}}

\author{Attila Lovas\thanks{lovas@math.bme.hu}, 
        Attila Andai\thanks{andaia@math.bme.hu}\\
  Department for Mathematical Analysis, \\
  Budapest University of Technology and Economics,\\
  H-1521 Budapest XI. Sztoczek u. 2, Hungary}
\date{May 14, 2015}

\maketitle

\begin{abstract}
A generalisation of the classical covariance for quantum mechanical observables 
  has previously been presented by Gibilisco, Hiai and Petz. Gibilisco and Isola 
  has proved that the usual quantum covariance gives the sharpest inequalities 
  for the determinants of covariance matrices.
We introduce a new generalisation of the classical covariance which gives better 
  inequalities than the classical one, furthermore it has a direct geometric 
  interpretation. 
\end{abstract}

\section*{Introduction}

The concept of uncertainty was introduced by Heisenberg in 1927 \cite{Hei}, who 
  demonstrated the impossibility of simultaneous measurement of position
  ($q$) and momentum ($p$).
He considered Gaussian distributions ($f(q)$), and defined uncertainty of $f$ as 
  its width $D_{f}$.
If the width of the Fourier transform of $f$ is denoted by $D_{\mathcal{F}(f)}$, 
  then the first formalisation of the uncertainty principle can be written as
\begin{equation*}
D_{f}D_{\mathcal{F}(f)}=\mbox{constant}.
\end{equation*}

\medskip
From now, the states of an $n$-level system are identified with 
  the set of $n\times n$ self adjoint positive semidefinite 
  matrices with trace $1$, and the physical observables are 
  identified with $n\times n$ self adjoint matrices.
   
In 1927 Kennard generalised Heisenberg's result \cite{Ken}, he proved the inequality
\begin{equation*}
\Var_{D}(A)\Var_{D}(B)\geq\frac{1}{4}
\end{equation*}
  for observables $A,B$ which satisfy the relation $\sz{A,B}=-\ci$, for
  every state $D$, where $\Var_{D}(A)=\Tr(DA^{2})-(\Tr(DA))^{2}$.
In 1929 Robertson \cite{Rob1} extended Kennard's result for arbitrary two 
  observables $A,B$
\begin{equation*}
\Var_{D}(A)\Var_{D}(B)\geq \frac{1}{4}\abs{\Tr(D\sz{A,B})}^2.
\end{equation*}
In 1930 Scr\"odinger \cite{Sch} improved this relation including the correlation 
  between observables $A,B$
\begin{equation*}
\Var_{D}(A)\Var_{D}(B)-\Cov_{D}(A,B)^{2}\geq 
  \frac{1}{4}\abs{\Tr(D\sz{A,B})}^2,
\end{equation*}
  where for a given state $D$ the (symmetrized) covariance of the observables 
  $A$ and $B$ is defined as
\begin{equation*}
\Cov_{D}(A,B)=\frac{1}{2}\gz{\Tr(DAB)+\Tr(DBA)}-\Tr(DA)\Tr(DB).
\end{equation*}
The Schr\"odinger uncertainty principle can be formulated as
\begin{equation*}
\det\begin{pmatrix}
\Cov_{D}(A,A) & \Cov_{D}(A,B)\\
\Cov_{D}(B,A) & \Cov_{D}(B,B)\end{pmatrix} \geq
\det\gz{-\frac{\ci}{2}\begin{pmatrix}
\Tr(D\sz{A,A}) & \Tr(D\sz{A,B}) \\
\Tr(D\sz{B,A}) & \Tr(D\sz{B,B})\end{pmatrix}}.
\end{equation*}
For the set of observables $(A_{i})_{1,\dots,N}$ this 
  inequality was generalised by Robertson in 1934 \cite{Rob2} as
\begin{equation*}
\det\gz{\sz{\Cov_{D}(A_h,A_j)}_{h,j=1,\dots,N}}
  \geq 
\det\gz{\sz{- \frac{\ci}{2}\Tr(D\sz{A_h,A_j})}_{h,j=1,\dots,N}}.
\end{equation*}
The main drawback of this inequality is that the right-hand side is identical 
  to zero whenever $N$ is odd.

Gibilisco and Isola in 2006 conjectured that 
\begin{equation}\label{eq:GibiliscoConjecture}
\det\gz{\sz{\Cov_{D}(A_h,A_j)}_{h,j=1,\dots,N}} \geq
\det\gz{
\sz{\frac{f(0)}{2}\scal{\ci\sz{D,A_{h}}}{\ci\sz{D,A_{j}}}_{D,f}}_{h,j=1,\dots,N}},
\end{equation}
  holds \cite{GibIso}, where the scalar product 
  $\scal{\cdot}{\cdot}_{D,f}$ is induced by an operator monotone
  function $f$, according to Petz classification theorem \cite{PetSud}.
We note that if the density matrix is not strictly positive, then the scalar 
  product $\scal{\cdot}{\cdot}_{D,f}$ is not defined.
The inequality \eqref{eq:GibiliscoConjecture} was studied first only in the 
  case $N=1$ for special functions $f$.
The cases $f(x)=f_{SLD}(x)=\frac{1+x}{2}$ and $f(x)=f_{WY}(x)=\frac{1}{4}(\sqrt{x}+1)^{2}$ were
  proved by Luo in \cite{Luo1} and \cite{Luo2}.
The general case of the conjecture was been proved by Hansen in \cite{Han} and 
  shortly after by Gibilisco et al. with a different technique in 
  \cite{GibImpIso1}.

In the case $N=2$ the inequality was proved for $f=f_{WY}$ by Luo, Q. Zhang and 
  Z. Zhang \cite{LuoZha1} \cite{LuoZha2} \cite{LuoZha3}.
The case of Wigner--Yanase--Dyson metric, where
  $f_{\beta}(x)=\frac{\beta(1-\beta)(x-1)^{2}}{(x^{\beta}-1)(x^{1-\beta}-1)}$
  ($\beta\in\left[ -1,2\right]\setminus\kz{0,1}$) was proved 
  independently by Kosaki \cite{Kos} and by Yanagi, Furuichi and Kuriyama 
  \cite{YanFurKur}.
The general case is due to Gibilisco, Imparato and Isola \cite{GibIso}   
  \cite{GibImpIso1}.

For arbitrary $N$ the conjecture was proved by Andai \cite{And} and 
  Gibilisco, Imparato and Isola \cite{GibImpIso3}.
The inequality \eqref{eq:GibiliscoConjecture} is called
  \emph{dynamical uncertainty principle} \cite{GibFumPetz}
  because the right-hand side can be interpreted as the volume of a 
  parallelepiped determined by the tangent vectors of the time-dependent 
  observables $A_k(t)=\ce^{\ci tD}A_k \ce^{-\ci tD}$.

Gibilisco, Hiai and Petz studied the behaviour of a possible generalization of 
  the covariance under coarse graining and they deduced that the covariance
  must have the following form for traceless observables $A,B$
\begin{equation}\label{petzcov}
\Cov_D^f (A,B) = \Tr \gz{A  f(L_{n,D}R_{n,D}^{-1})R_{n,D}(B)},
\end{equation}
  where $L_{n,D}$ and $R_{n,D}$ are superoperators acting on $n\times n$ 
  matrices like $L_{n,D}(A) = DA$, $R_{n,D}(A) = AD$ and $f$ is a symmetric and 
  normalized operator monotone function \cite{GibFumPetz}. 
Quantum covariances of the form \eqref{petzcov} are called
  \emph{quantum $f$-covariance}.
It has been proved \cite{GibFumPetz} that the generalized form of dynamical 
  uncertainty principle holds true for an arbitrary quantum $f$-covariance
\begin{equation*}
\det\gz{\sz{\Cov_{D}^g (A_h,A_j)}_{h,j=1,\dots,N}} \geq
\det\gz{
\sz{f(0) g(0)\scal{\ci\sz{D,A_{h}}}{\ci\sz{D,A_{j}}}_{D,f}}_{h,j=1,\dots,N}}
\end{equation*}
  and for all $g$ symmetric and normalized operator monotone function. 
If $g(x)=\frac{1+x}{2}$ is chosen, then we get the sharpest form of the
  inequality.

In this paper we will introduce the concept of the 
  \emph{symmetric $f$-covariance} as the scalar product of anticommutators 
\begin{equation*}
\qCov^{s}_{D,f}(A,B)=\frac{f(0)}{2}\scal{\kz{D,A}}{\kz{D,B}}_{D,f}
\end{equation*}
which has a clear geometric meaning.
Note that, $\qCov^{s}_{D,f}(A,B)$ and $\Cov_D^f(A,B)$ coincides with 
  $\Cov_D(A,B)$ whenever $f(x)=\frac{1+x}{2}$. 
We will prove that for any $f$ symmetric and normalized operator monotone function
\begin{equation*}
\det\gz{
\sz{\frac{f(0)}{2} \scal{\kz{D,A_{h}}}{\kz{D,A_{j}}}_{D,f}}_{h,j=1,\dots,N}}
\geq
\det\gz{\sz{\frac{f(0)}{2}   
  \scal{\ci\sz{D,A_{h}}}{\ci\sz{D,A_{j}}}_{D,f}}_{h,j=1,\dots,N}}
\end{equation*}
  holds which gives better inequalities than the dynamical uncertainty
  principle. 
Moreover we show that the function $f(x)=\frac{1}{2}\gz{\frac{1+x}{2}+\frac{2x}{1+x}}$ 
  gives the smallest universal upper bound for the right-hand side, that is, for
  every symmetric and normalized operator monotone function $g$
\begin{equation*}
\det\gz{
\sz{\frac{f(0)}{2} \scal{\kz{D,A_{h}}}{\kz{D,A_{j}}}_{D,f}}_{h,j=1,\dots,N}}
\geq
\det\gz{\sz{\frac{g(0)}{2}   
  \scal{\ci\sz{D,A_{h}}}{\ci\sz{D,A_{j}}}_{D,g}}_{h,j=1,\dots,N}}
\end{equation*}
  holds.

To make the paper self contained and understandable by the largest possible 
  audience, in Section 1, we briefly outline the geometry of the state space and 
  we give a simple proof for the general form of uncertainty relation which was 
  presented in \cite{GibFumPetz} and which will be used in the sequel.
In Section 2, we define the symmetric and antisymmetric $f$-covariances of
  observables, depending on an operator monotone function and we give  
  their local form.
The main inequalities can be found in Section 3, where we prove that the symmetric
  covariance gives better upper bound for the dynamical covariance and we
  estimate the sharpness of the new upper bound.
In Section 4 we present the smallest universal upper bound 
  for dynamical uncertainty principles determined by any operator monotone function.

\section{Riemannian metrics on the state space}

Let us denote by $\MN$ the set of $n\times n$ positive definite
  matrices and by $\MNN$ the interior of the $n$-level quantum mechanical state
  space i.e. the set of $n\times n$ positive definite trace one  
  matrices, namely $\MNN = \kz{D\in\MN |\Tr D = 1,\,D>0}$.
Let $\TMN$ be the set of observables of the $n$-level quantum system,
  in other words the set of $n\times n$ self adjoint matrices, and
  let $\TMNN$ be the set observables with zero trace.
Spaces $\MN$ and $\MNN$ are form convex sets in the space of self adjoint 
  matrices, and they are obviously differentiable manifolds \cite{HiaPetTot}.
The tangent space of $\MN$ at a given state $D$  can be identified 
  with $\TMN$ and the tangent space of $\MNN$ with $\TMNN$.

\begin{definition}  
Let $(K^{(m)})_{m\in\N}$ be  family of Riemannian metrics on $\MNN$.
This family of metrics is said to be {\em monotone} if
\begin{equation*}
K_{T(D)}^{(m)}(T(X),T(X))\leq K_{D}^{(n)}(X,X)
\end{equation*}
  holds for every completely positive, trace preserving linear map
  $T:M_{n}(\C)\to M_{m}(\C)$ (such a mapping is called a stochastic mapping), 
  for every $D\in\MNN$ and for all $X\in\TMNN$ and $m,n\in\N$.
\end{definition}

Let us denote by $\Fop$ the set of operator monotone functions 
  $f:\Rp\to\R$ with the property $f(x)=xf(x^{-1})$ for every 
  $x\in\Rp$ and with the normalization condition $f(1)=1$.
These functions are called symmetric and normalized operator
  monotone functions and they play a crucial role in theory of
  monotone metrics.
For the mean induced by the operator monotone function
  $f\in\Fop$ we also introduce the notation
\begin{equation*}
m_{f}:\Rp\times\Rp\to\Rp\qquad (x,y)\mapsto yf\gz{\frac{x}{y}}.
\end{equation*} 
The reciprocal of $m_f$ is called the Chentsov--Morozova function
  associated to the function $f\in\Fop$.
Examples of elements of $\Fop$ are presented in the following list
\begin{center}
\begin{tabular}{lll}
$f_{RLD}(x)=\frac{2x}{1+x}$, 
  & $f_{SLD}(x)=\frac{1+x}{2}$, 
  &  $f_{P1}(x)=\frac{2x^{\alpha+\frac{1}{2}}}{1+x^{2\alpha}}$ 
    $\quad\alpha\in\sz{0,\frac{1}{2}}$,\\
$f_{WY}(x)=\gz{\frac{1+\sqrt{x}}{2}}^2$, 
  & $f_{KM}(x)=\frac{x-1}{\log x}$, 
  & $f_{WYD}(x)=\beta (1-\beta)\frac{(x-1)^2}{(x^\beta-1)(x^{1-\beta}-1)}$ 
  $\quad\beta\in\left]0,\frac{1}{2}\right[$.
\end{tabular}
\end{center}
Let us introduce the sets of regular and non-regular
  elements in $\Fop$
\begin{equation*}
\Fop^r=\kz{f\in\Fop|f(0)\ne 0},  \quad \Fop^n=\kz{f\in\Fop|f(0) = 0}
\end{equation*}
  for which trivially $\Fop^r \cup\Fop^n=\Fop$ and $\Fop^r \cap\Fop^n=\emptyset$ 
  holds.
Next Theorem establishes a bijection between $\Fop^r$ and $\Fop^n$ \cite{GibHanIso}. 

\begin{theorem}\label{thm:FopRN}
If we define for every function $f$ in $\Fop^r$
\begin{equation*}
\tilde{f}(x)=\frac{1}{2} \sz{(1+x)-(1-x)^2\frac{f(0)}{f(x)}},
\end{equation*}
 then the correspondence $f\mapsto\tilde{f}$ is a bijection between 
 $\Fop^r$ and $\Fop^n$.
\end{theorem}
The mapping $\frac{f(0)}{f(x)}\mapsto\tilde{f}(x)$
  reverses the order which implies $\frac{f(0)}{f(x)}$ reaches its maximum when 
  $f=f_{SLD}$.

\begin{theorem}
{\it Petz classification theorem \cite{PetSud}.}
\label{th:Petz}
There exists a bijective correspondence between the monotone family of metrics
  $(K^{(n)})_{n\in\N}$ and functions $f\in\Fop$.
The metric is given by
\begin{equation}
\label{eq:generatedscalprod}
K_{D}^{(n)}(X,Y)=\Tr\gz{X m_f\gz{L_{n,D},R_{n,D}}^{-1}(Y)}
\end{equation}
  for all $n\in\N$ where $L_{n,D}(X)=DX$, $R_{n,D}(X)=XD$ for all
  $D,X\in M_{n}(\C)$.
\end{theorem}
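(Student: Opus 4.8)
The plan is to prove the stated bijection in both directions: from an arbitrary monotone family $(K^{(n)})_{n\in\N}$ I would extract a function $f\in\Fop$ reproducing (\ref{eq:generatedscalprod}), and conversely show that every $f\in\Fop$ yields a monotone family, after which I would check the two assignments are mutually inverse. The first reduction exploits that unitary conjugations $X\mapsto UXU^{*}$ are stochastic maps with stochastic inverses, so monotonicity applied in both directions forces each $K^{(n)}$ to be invariant under them. Evaluating at a diagonal state $D=\Diag(\lambda_{1},\dots,\lambda_{n})$ and invoking the diagonal phase unitaries $\Diag(e^{\ci\theta_{1}},\dots,e^{\ci\theta_{n}})$, which fix $D$, kills all off-diagonal cross terms, so that in the matrix-unit basis $E_{hj}$ the metric acquires the Morozova--Chentsov form $K_{D}(X,Y)=\sum_{h,j}c(\lambda_{h},\lambda_{j})\overline{X_{hj}}\,Y_{hj}$ for a single positive function $c\colon\Rp\times\Rp\to\Rp$.

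Next I would extract the structural constraints on $c$. Symmetry $c(\lambda,\mu)=c(\mu,\lambda)$ follows since the form must be real on self-adjoint matrices, so only the symmetric part of $c$ contributes. The degree $-1$ homogeneity $c(s\lambda,s\mu)=s^{-1}c(\lambda,\mu)$ is forced by comparing the $M_{n}(\C)$ and $M_{nk}(\C)$ members of the family through the embedding $X\mapsto X\otimes\tfrac{1}{k}I_{k}$ and its partial-trace companion, which are stochastic and rescale all eigenvalues uniformly. Setting $f(x):=1/c(x,1)$ and normalizing $c(1,1)=1$, these two properties translate precisely into $f(x)=xf(x^{-1})$ and $f(1)=1$, so $f\in\Fop$, and the identity $c(\lambda,\mu)=\frac{1}{\mu f(\lambda/\mu)}$ reproduces (\ref{eq:generatedscalprod}) once one checks, using that $L_{n,D}$ and $R_{n,D}$ commute, that the superoperator $R_{n,D}^{1/2}f(L_{n,D}R_{n,D}^{-1})R_{n,D}^{1/2}$ acts on $E_{hj}$ by the scalar $\lambda_{j}f(\lambda_{h}/\lambda_{j})$.

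The heart of the argument, and the step I expect to be the main obstacle, is the equivalence between monotonicity of the whole family and operator monotonicity of the scalar $f$. For necessity I would feed monotonicity a carefully chosen non-invertible stochastic map, the decisive ones being partial traces, or conditional expectations, of the form $M_{2}(\C)\otimes M_{2}(\C)\to M_{2}(\C)$; testing the resulting scalar inequality on a multiparameter diagonal family should force the positive semidefiniteness of the Löwner matrix $\bigl[(f(\lambda_{h})-f(\lambda_{j}))/(\lambda_{h}-\lambda_{j})\bigr]_{h,j}$, i.e. operator monotonicity of $f$ by Löwner's theorem. For sufficiency I would recognise $R_{n,D}^{1/2}f(L_{n,D}R_{n,D}^{-1})R_{n,D}^{1/2}=m_{f}(L_{n,D},R_{n,D})$ as the operator mean of the commuting positive superoperators $L_{n,D},R_{n,D}$, and derive the required superoperator inequality $T^{*}\,m_{f}(L_{m,T(D)},R_{m,T(D)})^{-1}\,T\leq m_{f}(L_{n,D},R_{n,D})^{-1}$ from the transformer (Kadison--Schwarz type) inequality for operator means applied to the Schwarz map $T$; equivalently one may decompose $f$ by Löwner's integral representation and verify monotonicity for each elementary function, the positivity inequality then surviving positive superpositions and limits.

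Finally I would close the bijection. Injectivity is immediate, since evaluating any metric generated by (\ref{eq:generatedscalprod}) on diagonal states and matrix units recovers $c$ and hence $f$; surjectivity is exactly the content of the first two paragraphs, which manufacture an $f\in\Fop$ from any prescribed monotone family. That the two constructions are mutually inverse then follows by comparing the explicit expression $c(\lambda,\mu)=\frac{1}{\mu f(\lambda/\mu)}$ obtained in each direction.
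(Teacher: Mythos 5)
The paper does not prove this statement at all: it is quoted as the Petz classification theorem with a citation to Petz and Sud\'ar, so there is no in-paper argument to compare yours against. Your outline follows the standard Chentsov--Morozova--Petz strategy (unitary invariance from invertible stochastic maps, reduction to a kernel $c(\lambda_{h},\lambda_{j})$ on pairs of eigenvalues, degree $-1$ homogeneity from the embedding $X\mapsto X\otimes\frac{1}{k}I_{k}$ and its partial trace, and the translation of monotonicity into operator monotonicity of $f$), and several of the reductions you describe are correct as stated.

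Two steps are genuinely incomplete. First, the diagonal phase unitaries $\Diag(e^{\ci\theta_{1}},\dots,e^{\ci\theta_{n}})$ act trivially on the diagonal matrix units $E_{hh}$, so phase invariance does \emph{not} kill the cross terms $K_{D}(E_{hh},E_{kk})$ for $h\neq k$; it only decouples the off-diagonal units from one another and from the diagonal block. The claimed form $\sum_{h,j}c(\lambda_{h},\lambda_{j})\overline{X_{hj}}Y_{hj}$ therefore does not follow from your argument on the diagonal block, and pinning that block down as $\sum_{h}|X_{hh}|^{2}/\lambda_{h}$ (equivalently $c(\lambda,\lambda)=1/\lambda$, which is what makes $f(1)=1$ a normalization rather than an extra axiom) requires restricting the family to commutative subalgebras and invoking Chentsov's classical uniqueness theorem for the Fisher metric under Markov morphisms --- an ingredient absent from your sketch. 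Second, the necessity direction, which you yourself identify as the heart of the matter, is only gestured at: it is not clear which stochastic map and which test data in your ``multiparameter diagonal family'' would produce exactly the L\"owner divided-difference matrix of $f$, since a partial trace averages the values $1/c(\lambda_{h},\lambda_{j})$ in a way that does not obviously linearize to that matrix. The standard and reliable route is to show that monotonicity forces $(A,B)\mapsto m_{f}(A,B)$ to satisfy the Kubo--Ando axioms for an operator connection (joint monotonicity plus the transformer inequality) and then quote the Kubo--Ando correspondence with operator monotone functions; without that, or a worked-out substitute, your necessity argument does not close. Your sufficiency argument via the transformer inequality and the integral representation of $f$ is sound.
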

 
The metric defined in Theorem \ref{th:Petz} can be extended to the space $\MN$. 
For every $D\in\MN$ and matrices $A,B\in\TMN$ let us define
\begin{equation*}
\scal{A}{B}_{D,f}=\Tr\gz{A  m_f\gz{L_{n,D},R_{n,D}}^{-1}(B)},
\end{equation*}
  with this notion the pair $(\MN,\scal{\cdot}{\cdot}_{\cdot,f})$ will be a
  Riemannian manifold for every operator monotone function $f\in\Fop$.

Next, we introduce Riemannian metrics in more general form 
  on the state space for which determinant inequalities leading to
  uncertainty principles will be proved.

\begin{definition}
Let us introduce the notation 
\begin{equation*}
\CM=\kz{g:\Rp\times\Rp\to\Rp\Bigm\vert\ 
\begin{array}{l}
  \mbox{$g$ is a symmetric smooth function, with analytical}\\
  \mbox{extension defined on a neighborhood of $\Rp\times\Rp$}
\end{array}}
\end{equation*}
  implicating the correspondence with Chentsov--Morozova functions.
Fix a function $g\in\CM$.
Define for every $D\in\MN$ and for every $A,B\in\TMN$ 
\begin{equation*}
(A,B)_{D,g}=\Tr\gz{A g(L_{n,D},R_{n,D})(B)}.
\end{equation*}
\end{definition}

\begin{theorem}
\label{th:g-metric}
For every $g\in\CM$ the function $(\cdot,\cdot)_{\cdot,g}$
  defined above is a Riemannian metric on $\MN$.
If the state $D\in\MNN$ is of the form $D=\Diag(\lambda_{1},\dots,\lambda_{n})$, 
  then for every $D\in\MN$, $A,B\in\TMN$ and $g\in\CM$ the
  local form of $(A,B)_{D,g}$ can be expressed as follows
\begin{equation*}
(A,B)_{D,g}=\sum_{k,l=1}^{n}A_{lk}B_{kl}g(\lambda_{k},\lambda_{l}).
\end{equation*}
\end{theorem}
\begin{proof}
Note, that the operators $L_{n,D}$ and $R_{n,D}$ are self adjoint and positive 
  definite with respect to the Hermitian form $(X,Y)\mapsto\Tr\gz{X^{*}Y}$.
For an analytical function $f$ the operators $f(L_{n,D})$ and $f(R_{n,D})$
  are well defined, and by the Riesz--Dunford operator calculus we have
\begin{equation*}
f(L_{n,D})=\frac{1}{2\pi\ci}\oint f(\xi)(\xi\id_{\TMN}-L_{n,D})^{-1}\dint \xi,
\end{equation*}
  where we integrate once around the spectrum of $D$.
The operators $L_{n,D}$ and $R_{n,D}$ commute, therefore a similar reasoning 
  gives the formula
\begin{equation*}
g(L_{n,D},R_{n,D})
=\frac{1}{(2\pi\ci)^{2}}\oint\oint g(\xi,\eta)(\xi\id-L_{n,D})^{-1}
  \circ (\eta\id-R_{n,D})^{-1}\dint\xi\dint\eta
\end{equation*}
  or equivalently
\begin{equation*}
g(L_{n,D},R_{n,D})(B)
=\frac{1}{(2\pi\ci)^{2}}\oint\oint g(\xi,\eta)(\xi-D)^{-1}
  B (\eta-D)^{-1}\dint\xi\dint\eta.
\end{equation*}
Assuming the matrix $D$ to be the form of
  $D=\Diag(\lambda_{1},\dots,\lambda_{n})$ for matrix units $E_{ij}$ and 
  $E_{kl}$ we have
\begin{equation*}
(E_{ij},E_{kl})_{D,g}
=\Tr\frac{1}{(2\pi\ci)^{2}}\oint\oint g(\xi,\eta)E_{ij}(\xi-D)^{-1}
  E_{kl}(\eta-D)^{-1}\dint\xi\dint\eta
=\delta_{kj}\delta_{li}g(\lambda_{j},\lambda_{i}).
\end{equation*}
Since for arbitrary matrices $A,B\in\TMN$ we have 
  $A=\sum_{i,j=1}^{n}A_{ij}E_{ij}$ and $B=\sum_{k,l=1}^{n}B_{kl}E_{kl}$, 
  therefore
\begin{equation*}
(A,B)_{D,g}
=\sum_{i,j,k,l=1}^{n}\delta_{kj}\delta_{li}A_{ij}B_{kl}g(\lambda_{j},\lambda_{i})
=\sum_{k,l=1}^{n}A_{lk}B_{kl}g(\lambda_{k},\lambda_{l}).
\end{equation*}
For every $g\in\CM$ and $D\in\MN$ the function $(A,B)\mapsto (A,B)_{D,g}$
  is a positive bilinear map, and for every smooth vector field 
  $\gamma:\MN\to\TMN$ the function
\begin{equation*}
\MN\to\R\qquad D\mapsto (\gamma(D),\gamma(D))_{D,g}
\end{equation*}
  is smooth, which means that $(\cdot,\cdot)_{\cdot,g}$ defines a 
  Riemannian metric on $\MN$.
\end{proof}
 
Following theorem states that Riemannian metrics corresponding to operator 
  monotone functions $f\in\Fop$ are special cases of metrics $(.,.)_{.,g}$ 
  for some $g\in\CM$.

\begin{theorem}
\label{th:classical-scalprod}
For every $D\in\MN$, $A,B\in\TMN$ and $f\in\Fop$ we have
\begin{equation*}
\scal{A}{B}_{D,f}
  =\sum_{k,l=1}^{n}A_{lk}B_{kl}\frac{1}{m_{f}(\lambda_{k},\lambda_{l})}.
\end{equation*}
\end{theorem}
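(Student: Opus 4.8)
The plan is to diagonalize $D$ and exploit the fact that the left- and right-multiplication superoperators $L_{n,D}$ and $R_{n,D}$ commute, so that a single matrix-unit basis simultaneously diagonalizes every superoperator appearing in the definition of $\scal{\cdot}{\cdot}_{D,f}$. First I would fix an orthonormal basis $(e_k)_{k=1}^{n}$ of $\C^{n}$ consisting of eigenvectors of the positive definite matrix $D$, say $De_k=\lambda_k e_k$ with $\lambda_k>0$, and introduce the matrix units $E_{kl}=e_k e_l^{*}$; the entries $A_{kl}=e_k^{*}Ae_l$, $B_{kl}=e_k^{*}Be_l$ occurring in the statement are understood in this basis, and the $\lambda_k$ are exactly these eigenvalues. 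The key computation is that $L_{n,D}(E_{kl})=DE_{kl}=\lambda_k E_{kl}$ and $R_{n,D}(E_{kl})=E_{kl}D=\lambda_l E_{kl}$, so each $E_{kl}$ is a common eigenvector of $L_{n,D}$ and $R_{n,D}$ with eigenvalues $\lambda_k$ and $\lambda_l$ respectively.

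Next I would apply this to the superoperator $\Phi=R_{n,D}^{\frac{1}{2}}f(L_{n,D}R_{n,D}^{-1})R_{n,D}^{\frac{1}{2}}$. Since $L_{n,D}$ and $R_{n,D}$ commute and $R_{n,D}$ is invertible (here $D\in\MN$ is positive definite, which guarantees $\lambda_l>0$), the product $L_{n,D}R_{n,D}^{-1}$ is a well-defined positive superoperator on the Hilbert--Schmidt space, and the functional calculus acts on the common eigenbasis by $f(L_{n,D}R_{n,D}^{-1})(E_{kl})=f(\lambda_k/\lambda_l)E_{kl}$. Chaining the three factors gives
\begin{equation*}
\Phi(E_{kl})=\lambda_l^{\frac{1}{2}}\, f\!\gz{\frac{\lambda_k}{\lambda_l}}\,\lambda_l^{\frac{1}{2}}\,E_{kl}
 =\lambda_l\, f\!\gz{\frac{\lambda_k}{\lambda_l}}\,E_{kl}=m_{f}(\lambda_k,\lambda_l)\,E_{kl},
\end{equation*}
so that $\Phi^{-1}(E_{kl})=m_{f}(\lambda_k,\lambda_l)^{-1}E_{kl}$. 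This step is the real heart of the argument; the only point demanding care is the justification of the operator functional calculus, i.e.\ that $f(L_{n,D}R_{n,D}^{-1})$ genuinely acts as scalar multiplication by $f(\lambda_k/\lambda_l)$, which follows precisely because the $E_{kl}$ are simultaneous eigenvectors of the commuting, self-adjoint (with respect to the Hilbert--Schmidt inner product) operators $L_{n,D}$ and $R_{n,D}$.

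Finally I would expand $A=\sum_{k,l}A_{kl}E_{kl}$ and $\Phi^{-1}(B)=\sum_{p,q}B_{pq}\,m_{f}(\lambda_p,\lambda_q)^{-1}E_{pq}$ and compute the trace using $E_{kl}E_{pq}=\delta_{lp}E_{kq}$ and $\Tr(E_{kq})=\delta_{kq}$, hence $\Tr(E_{kl}E_{pq})=\delta_{lp}\delta_{kq}$. This collapses the fourfold sum to
\begin{equation*}
\scal{A}{B}_{D,f}=\Tr\bigl(A\,\Phi^{-1}(B)\bigr)
 =\sum_{k,l=1}^{n}A_{kl}B_{lk}\frac{1}{m_{f}(\lambda_l,\lambda_k)},
\end{equation*}
and relabelling the summation indices $k\leftrightarrow l$ yields exactly $\sum_{k,l=1}^{n}A_{lk}B_{kl}\,m_{f}(\lambda_k,\lambda_l)^{-1}$, as claimed. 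The trace bookkeeping is routine; the substantive content is entirely in the simultaneous diagonalization of the previous paragraph.
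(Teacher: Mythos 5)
Your proof is correct, and it reaches the formula by a route that is related to but genuinely leaner than the paper's. The paper does not prove Theorem \ref{th:classical-scalprod} directly from the definition of $\scal{\cdot}{\cdot}_{D,f}$: it first establishes the more general Theorem \ref{th:g-metric}, computing $(E_{ij},E_{kl})_{D,g}$ for an arbitrary $g\in\CM$ by means of the two-variable Riesz--Dunford calculus (a double contour integral around the spectrum of $D$), and then specializes to $g(x,y)=1/m_{f}(x,y)$. You instead diagonalize the single superoperator $\Phi=R_{n,D}^{1/2}f(L_{n,D}R_{n,D}^{-1})R_{n,D}^{1/2}$ on the matrix-unit eigenbasis, justifying $f(L_{n,D}R_{n,D}^{-1})E_{kl}=f(\lambda_k/\lambda_l)E_{kl}$ by the spectral functional calculus for commuting self-adjoint operators on the Hilbert--Schmidt space; the eigenvalue bookkeeping $\Phi(E_{kl})=m_f(\lambda_k,\lambda_l)E_{kl}$, the inversion, and the trace computation are all done correctly, including the final index relabelling. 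The common core of both arguments is that the matrix units simultaneously diagonalize everything; what the paper's heavier machinery buys is the formula for a general two-variable $g\in\CM$ (not necessarily of the form $y\,h(x/y)$), which it needs later for the metrics $g_{cl}$, $g^{as}_f$, $g^{s}_f$, whereas your argument is more elementary and self-contained for this particular theorem but would have to be supplemented (essentially by the paper's contour-integral argument, or by an approximation of $g$ by sums of product functions) to recover Theorem \ref{th:g-metric} in full generality.
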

\begin{proof}
Assume that $f\in\Fop$, $D\in\MN$ and $A,B\in\TMN$.
If we define 
\begin{equation*}
g:\Rp\times\Rp\to\Rp\qquad
  (x,y)\mapsto \frac{1}{yf\gz{\frac{x}{y}}},
\end{equation*}
  then we have $g(x,y)=\frac{1}{m_{f}(x,y)}$ and
\begin{equation*}
\scal{A}{B}_{D,f}=(A,B)_{D,g}.
\end{equation*}
Applying Theorem \ref{th:g-metric} we have
\begin{equation*}
(A,B)_{D,g}=\sum_{k,l=1}^{n}A_{lk}B_{kl}g(\lambda_{k},\lambda_{l})
  =\sum_{k,l=1}^{n}A_{lk}B_{kl}\frac{1}{m_{f}(\lambda_{k},\lambda_{l})},
\end{equation*}
  which proves Theorem \ref{th:classical-scalprod}.
\end{proof}

For observable $A\in\TMN$ and a state $D\in\MNN$ we define 
  $A_{0}=A-\Tr(DA)I$, where $I$ is the $n\times n$ identity matrix. 
Using this transformation we have $\Tr DA_{0}=0$.

The next theorem states that inequalities between Chentsov--Morozova functions 
  induce inequalities between the determinants of Gram matrices 
  corresponding to the considered Riemannian metrics.
Before the theorem we note that the idea of the proof will be referred
  many times later and similar reasoning will give us the desired uncertainty 
  inequalities.

\begin{theorem}\label{th:mainineq}
Consider a density matrix $D\in\MNN$, functions $g_{1},g_{2}\in\CM$ such that
\begin{equation*}
g_{1}(x,y) \geq g_{2}(x,y) \qquad \forall x,y\in\Rp
\end{equation*}
  holds and an $N$-tuple of nonzero matrices $(A^{(k)})_{k=1,\dots,N}\in\TMN$.
Define the $N\times N$ matrices $\gCov_{D,g_{1}}$ and $\gCov_{D,g_{2}}$ with 
  entries
\begin{equation*}
\sz{\gCov_{D,g_{k}}}_{ij}=(A^{(i)}_{0},A^{(j)}_{0})_{D,g_{k}}\qquad k=1,2.
\end{equation*}
We have
\begin{equation}
\label{eq:mainineq}
\det(\gCov_{D,g_{1}})\geq \det(\gCov_{D,g_{2}})+\det(\gCov_{D,g_{1}}-\gCov_{D,g_{2}})+R(D,g_{1},g_{2},N),        \end{equation}
  where
\begin{equation*}
R(D,g_{1},g_{2},N)=\sum_{k=1}^{N-1}\binom{N}{k}
 \sz{\det(\gCov_{D,g_{1}})}^{\frac{k}{N}}
 \sz{\det(\gCov_{D,g_{1}}-\gCov_{D,g_{2}})}^{\frac{N-k}{N}}.
\end{equation*}
\end{theorem}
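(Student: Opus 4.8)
The plan is to recognise $\gCov_{D,g_1}$ and $\gCov_{D,g_2}$ as Gram matrices of positive semidefinite bilinear forms and then to deduce (\ref{eq:mainineq}) from Minkowski's determinant inequality. Since the form $(A,B)_{D,g}=\Tr\bigl(A\,g(L_{n,D},R_{n,D})(B)\bigr)$ is invariant under simultaneous unitary conjugation of $D$ and of the $A^{(i)}_0$, and such a conjugation leaves every determinant $\det\gCov_{D,g_k}$ unchanged, I would first assume without loss of generality that $D=\Diag(\lambda_1,\dots,\lambda_n)$ with $\lambda_k>0$, so that the explicit diagonal formula of Theorem~\ref{th:g-metric} is available.

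The crucial structural step is positivity. For real coefficients $c=(c_1,\dots,c_N)$ put $C=\sum_{i=1}^{N}c_iA^{(i)}_0$, a self adjoint matrix; bilinearity of $(\cdot,\cdot)_{D,g}$ gives $c^{T}\gCov_{D,g}\,c=(C,C)_{D,g}$, and Theorem~\ref{th:g-metric} evaluates this as
\begin{equation*}
(C,C)_{D,g}=\sum_{k,l=1}^{n}C_{lk}C_{kl}\,g(\lambda_k,\lambda_l)=\sum_{k,l=1}^{n}\left|C_{kl}\right|^{2}g(\lambda_k,\lambda_l)\geq 0,
\end{equation*}
because $g\geq 0$ on $\Rp\times\Rp$. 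Hence each $\gCov_{D,g_k}$ is a real symmetric positive semidefinite matrix. By linearity the entries of the difference are $(A^{(i)}_0,A^{(j)}_0)_{D,g_1}-(A^{(i)}_0,A^{(j)}_0)_{D,g_2}$, which by two applications of Theorem~\ref{th:g-metric} equal $\sum_{k,l}\sz{A^{(i)}_0}_{lk}\sz{A^{(j)}_0}_{kl}(g_1-g_2)(\lambda_k,\lambda_l)$; thus $\gCov_{D,g_1}-\gCov_{D,g_2}$ is the Gram matrix of the weight $g_1-g_2$, and the hypothesis $g_1\geq g_2$ makes it nonnegative, so the same computation shows the difference is again positive semidefinite. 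I expect this to be the main point to get right: it is the only place the ordering hypothesis is used, and it is worth stressing that one does \emph{not} need $g_1-g_2\in\CM$ (it need not be a Cencov--Morozova function) — only its pointwise nonnegativity, which the diagonal formula converts directly into positive semidefiniteness of the matrix.

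Writing $Y=\gCov_{D,g_2}$ and $Z=\gCov_{D,g_1}-\gCov_{D,g_2}$, we now have two positive semidefinite matrices with $\gCov_{D,g_1}=Y+Z$, so Minkowski's determinant inequality applies:
\begin{equation*}
\sz{\det(Y+Z)}^{\frac{1}{N}}\geq\sz{\det Y}^{\frac{1}{N}}+\sz{\det Z}^{\frac{1}{N}}.
\end{equation*}
Raising to the $N$-th power and expanding the right hand side by the binomial theorem yields
\begin{equation*}
\det\gCov_{D,g_1}\geq\det\gCov_{D,g_2}+\det Z+\sum_{k=1}^{N-1}\binom{N}{k}\sz{\det\gCov_{D,g_2}}^{\frac{k}{N}}\sz{\det Z}^{\frac{N-k}{N}},
\end{equation*}
which is exactly (\ref{eq:mainineq}) once the bracketed binomial terms are collected into $R(D,g_1,g_2,N)$.

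The whole argument thus reduces to a single genuine inequality, Minkowski's, with everything else being linear algebra and positivity bookkeeping; Minkowski's inequality and the superadditivity of $\det^{1/N}$ hold on the full positive semidefinite cone, so no nondegeneracy of the $A^{(i)}$ is required and the result persists when the Gram matrices are singular. The one place I would double-check is the identification of the binomial terms with $R$: the expansion of $(q+r)^{N}$ produces $\sz{\det\gCov_{D,g_2}}^{k/N}$ as the base of the first factor (so that $\det\gCov_{D,g_2}+\det Z+R=\bigl(\sz{\det\gCov_{D,g_2}}^{1/N}+\sz{\det Z}^{1/N}\bigr)^{N}$), and I would confirm this matches the base intended in the displayed definition of $R$ in (\ref{eq:mainineq}).
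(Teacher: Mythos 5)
Your proof is correct and follows essentially the same route as the paper's: diagonalize $D$, use the coordinate formula of Theorem~\ref{th:g-metric} to show that $\gCov_{D,g}$ and the difference $\gCov_{D,g_{1}}-\gCov_{D,g_{2}}$ (with weight $g_{1}-g_{2}$) are positive semidefinite Gram matrices, then apply Minkowski's determinant inequality and expand by the binomial theorem. Your closing worry is well founded: the expansion produces $\sz{\det(\gCov_{D,g_{2}})}^{k/N}$ in the cross terms, so the displayed $R$ with $\sz{\det(\gCov_{D,g_{1}})}^{k/N}$ appears to be a typo in the statement — the paper's own proof likewise only establishes the version with $g_{2}$.
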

\begin{proof}
We can assume that the density matrix $D$ is diagonal, 
  $D=\Diag(\lambda_{1},\dots,\lambda_{n})$.
The matrix $\gCov_{D,g_{k}}$ ($k=1,2$) is obviously real and symmetric.
First we prove that for every $g\in\CM$ the matrix $\gCov_{D,g}$ 
  is positive.
Consider a vector $x\in\mathbb{C}^{N}$ and define an $n\times n$ matrix as
  $C=\sum_{a=1}^{N}x_{a}A_{0}^{(a)}$.
Then we have
\begin{align}
\label{eq:posdef}
\scal{x}{\gCov_{D,g}x}
&=\sum_{a,b=1}^{N}\overline{x_{a}}x_{b}\gCov_{D,f}(A^{(a)},A^{(b)})\\
&=\sum_{a,b=1}^{N}\sum_{h,j=1}^{n}g(\lambda_{h},\lambda_{j})
  \overline{x_{a}}x_{b}\cqre{A^{(a)}}_{hj}\cqre{A^{(b)}}_{jh}\nonumber\\
&=\sum_{h,j=1}^{n}g(\lambda_{h},\lambda_{j})\overline{\left(\sum_{a=1}^{N}x_{a}\cqre{A^{(a)}}_{jh}\right)}
  \left(\sum_{b=1}^{N}x_{b}\cqre{A^{(b)}}_{jh}\right)\nonumber\\
&=\sum_{h,j=1}^{n}g(\lambda_{h},\lambda_{j})\vert C_{hj}\vert^{2}
  \geq 0.\nonumber
\end{align}

We can repeat our arguments providing Equation (\ref{eq:posdef}) for the matrix 
  $\gCov_{D,g_{1}}-\gCov_{D,g_{2}}$ using $g_{1}-g_{2}$ instead of $g$.
This leads us to the conclusion that $\gCov_{D,g_{1}}-\gCov_{D,g_{2}}$ is a
  real, symmetric, positive matrix.

Using the Minkowski determinant inequality 
  (see for example \cite{BecBel} p. 70. or \cite{HorJoh})
  for real symmetric positive matrices $\gCov_{D,g_{1}}$ and 
  $(\gCov_{D,g_{1}}-\gCov_{D,g_{2}})$ we have
\begin{equation*}
\kz{\det( \gCov_{D,g_{2}}+(\gCov_{D,g_{1}}-\gCov_{D,g_{2}}))}^{\frac{1}{N}}\geq
\kz{\det( \gCov_{D,g_{2}})}^{\frac{1}{N}}+
\kz{\det( \gCov_{D,g_{1}}-\gCov_{D,g_{2}})}^{\frac{1}{N}},
\end{equation*}
  which gives the inequality stated in the Theorem.
\end{proof}

\section{Symmetric $f$-covariances}

Now we define some well-known covariances and introduce the notion 
  of symmetric and antisymmetric covariances.

\begin{definition}
For observables $A,B\in\TMN$, state $D\in\MNN$ and function 
  $f\in\Fop$ we define 
  the {\em covariance} of $A$ and $B$
\begin{equation*}
\Cov_{D}(A,B)=\frac{1}{2}\gz{\Tr(DAB)+\Tr(DBA)}-\Tr(DA)\Tr(DB),
\end{equation*}
  the {\em quantum $f$-covariance}, which  was
  introduced by Gibilisco and Isola \cite{GibIsoNew} 
\begin{equation*}
\Cov_D^f (A,B) = \Tr \gz{A  f(L_{n,D}R_{n,D}^{-1})R_{n,D}(B)},
\end{equation*}
  the {\em antisymmetric $f$-covariance}  
\begin{equation*}
\qCov^{as}_{D,f}(A,B)=\frac{f(0)}{2}\scal{\ci\sz{D,A}}{\ci\sz{D,B}}_{D,f}
\end{equation*}
  and the {\em symmetric $f$-covariance} as
\begin{equation*}
\qCov^{s}_{D,f}(A,B)=\frac{f(0)}{2}\scal{\kz{D,A}}{\kz{D,B}}_{D,f},
\end{equation*}
where $[.,.]$ is the commutator of matrices
and $\kz{.,.}$ denotes the anticommutator respectively.
\end{definition}

Note that the symmetric and antisymmetric covariances are trivial 
  if $f\in\Fop^{n}$ (that is $f(0)=0$).
The local form of the covariances at a given state is the
  following.

\begin{theorem}
\label{th:covariances}
Assume that the state $D\in\MNN$ is of the form 
  $D=\Diag(\lambda_{1},\dots,\lambda_{n})$.
For every $A,B\in\TMN$  and $f\in\Fop$ we have
\begin{align*}
\Cov_{D}(A,B)
  &=\sum_{k,l=1}^{n}
  \frac{\lambda_{k}+\lambda_{l}}{2} A_{lk}B_{kl}-\Tr(DA)\Tr(DB)
  =\sum_{k,l=1}^{n}
  \frac{\lambda_{k}+\lambda_{l}}{2} \sz{A_{0}}_{lk}\sz{B_{0}}_{kl}\\
\Cov_D^f (A,B)
  &=\sum_{k,l=1}^{n}
  m_{f}(\lambda_{k},\lambda_{l}) A_{lk}B_{kl}\\
\qCov^{as}_{D,f}(A,B)
  &=\frac{f(0)}{2}\sum_{k,l=1}^{n}
  \frac{(\lambda_{k}-\lambda_{l})^{2}}{m_{f}(\lambda_{k},\lambda_{l})}
  A_{lk}B_{kl}
  =\frac{f(0)}{2}\sum_{k,l=1}^{n}
  \frac{(\lambda_{k}-\lambda_{l})^{2}}{m_{f}(\lambda_{k},\lambda_{l})}
  \sz{A_{0}}_{lk}\sz{B_{0}}_{kl}\\
\qCov^{s}_{D,f}(A,B)
  &=\frac{f(0)}{2}\sum_{k,l=1}^{n}
  \frac{(\lambda_{k}+\lambda_{l})^{2}}{m_{f}(\lambda_{k},\lambda_{l})}
  A_{lk}B_{kl}\\
\qCov^{s}_{D,f}(A,B)
  &=\frac{f(0)}{2}\sum_{k,l=1}^{n}
  \frac{(\lambda_{k}+\lambda_{l})^{2}}{m_{f}(\lambda_{k},\lambda_{l})}
  \sz{A_{0}}_{lk}\sz{B_{0}}_{kl}+2f(0)\Tr(DA)\Tr(DB).
\end{align*}
\end{theorem}
\begin{proof}
Simple matrix computation, we show the case of antisymmetric quantum covariance.
\begin{align*}
\qCov^{as}_{D,f}(A,B)
&=\frac{f(0)}{2}\scal{DA-AD}{DB-BD}_{D,f}\\
&=\frac{f(0)}{2}\sum_{k,l=1}^{n}\frac{1}{m_{f}(\lambda_{k},\lambda_{l})}
  (DA-AD)_{lk}(DB-BD)_{kl}
\end{align*}
Since $(DA-AD)_{lk}=(\lambda_{l}-\lambda_{k})A_{lk}$ we immediately have the 
  formula in the Theorem.

All of the covariances in the Theorem have the form of
\begin{equation*}
\sum_{k,l=1}^{n}\alpha_{kl}A_{kl}B_{kl},
\end{equation*}
  where $\alpha$ is a symmetric function.
Since $\sz{A_{0}}_{kl}=A_{kl}-\delta_{kl}\Tr(DA)$ we have the equality
\begin{align*}
\sum_{k,l=1}^{n}\alpha_{kl}\sz{A_{0}}_{kl}\sz{B_{0}}_{kl}
  =&\sum_{k,l=1}^{n}\alpha_{kl}A_{kl}B_{kl}
  +\gz{\sum_{k=1}^{n}\alpha_{kk}}\Tr(DA)\Tr(DB)\\
  &-\gz{\sum_{k=1}^{n}\alpha_{kk}A_{kk}}\Tr(DB)
  -\gz{\sum_{k=1}^{n}\alpha_{kk}B_{kk}}\Tr(DA).
\end{align*}
Using this equation one can express $\sum_{k,l=1}^{n}\alpha_{kl}A_{kl}B_{kl}$
  in terms of normalized observables.
\end{proof}

It is worth noting that for operator monotone function 
  $f_{SLD}(x)=\frac{1+x}{2}$ we have
\begin{equation*}
\Cov_{D}(A_{0},B_{0})=\qCov^{s}_{D,f_{SLD}}(A_{0},B_{0})
  =\Cov_{D}^f(A_{0},B_{0}).
\end{equation*}

Next we define the Chentsov--Morozova functions associated
  to $\qCov^{as}_{D,f}(A,B)$ and $\qCov^{s}_{D,f}(A,B)$.

\begin{definition}
For an operator monotone function $f\in\Fop$ let us define the following 
  $\Rp\to\Rp$ functions
\begin{equation*}
f_{as}(x)=\frac{f(0)(1-x)^2}{2f(x)},\quad
  f_s(x)=\frac{f(0)(1+x)^2}{2f(x)}
\end{equation*}
  and $\Rp\times\Rp\to\Rp$ functions
\begin{equation*}
g_{cl}(x,y)=\frac{x+y}{2},\quad
g^{as}_{f}(x,y)=\frac{f(0)(x-y)^{2}}{2\,m_f(x,y)},\quad
g^{s}_{f}(x,y)=\frac{f(0)(x+y)^{2}}{2\,m_f(x,y)}.
\end{equation*}
\end{definition}

With this notation we have 
\begin{equation*}
 g_f^{as}(x,y)=x f_{as}\gz{\frac{y}{x}}
  \qquad\mbox{and}\qquad 
  g_f^{s}(x,y)=x f_{s}\gz{\frac{y}{x}}.
\end{equation*}

A simple combination of Theorem \ref{th:covariances} and \ref{th:g-metric}
  gives the following Corollary.
\begin{corollary}
For every $f\in\Fop$, $D\in\MNN$ and for every matrices $A,B\in\TMN$ the
  following equalities hold.
\begin{align*}
(A_{0},B_{0})_{D,g_{cl}}&=\Cov_{D}(A_{0},B_{0})\\
(A_{0},B_{0})_{D,g^{as}_{f}}&=\qCov_{D,f}^{as}(A_{0},B_{0})
  =\qCov_{D,f}^{as}(A,B)\\
(A_{0},B_{0})_{D,g^{s}_{f}}&=\qCov_{D,f}^{s}(A_{0},B_{0})
\end{align*}
\end{corollary}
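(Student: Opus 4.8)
The plan is to recognise both sides of each of the three identities as a single sum of the form $\sum_{k,l=1}^{n}\sz{A_{0}}_{lk}\sz{B_{0}}_{kl}\cdot c_{kl}$ and then to match the coefficients $c_{kl}$. Since the bilinear forms $(\cdot,\cdot)_{\cdot,g}$ and all three covariances are defined invariantly, they are unchanged under simultaneous conjugation of $D$, $A$, $B$ by a common unitary (the functional calculus in $(A,B)_{D,g}=\Tr\bigl(Ag(L_{n,D},R_{n,D})(B)\bigr)$ commutes with conjugation, and the traces, commutators and anticommutators entering the covariances are equivariant). Hence I first reduce to the case $D=\Diag(\lambda_{1},\dots,\lambda_{n})$, which is exactly the hypothesis under which both Theorem \ref{th:g-metric} and Theorem \ref{th:covariances} are available.

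Next I rewrite the three functions of the preceding definition so that their denominators display $m_{f}$. The normalisation $f(x)=xf(x^{-1})$ makes $m_{f}$ symmetric, $m_{f}(x,y)=yf\gz{\frac{x}{y}}=xf\gz{\frac{y}{x}}=m_{f}(y,x)$, so that $xf\gz{\frac{y}{x}}=m_{f}(x,y)$ and therefore $g^{as}_{f}(x,y)=\frac{f(0)(x-y)^{2}}{2m_{f}(x,y)}$ and $g^{s}_{f}(x,y)=\frac{f(0)(x+y)^{2}}{2m_{f}(x,y)}$, while $g_{cl}(x,y)=\frac{x+y}{2}$ requires no rewriting. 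Applying Theorem \ref{th:g-metric} with $A,B$ replaced by $A_{0},B_{0}$ then gives, for each of the three functions, $(A_{0},B_{0})_{D,g}=\sum_{k,l=1}^{n}\sz{A_{0}}_{lk}\sz{B_{0}}_{kl}\,g(\lambda_{k},\lambda_{l})$. Comparing these sums coefficient by coefficient with the $A_{0},B_{0}$ forms of $\Cov_{D}$, $\qCov^{as}_{D,f}$ and $\qCov^{s}_{D,f}$ furnished by Theorem \ref{th:covariances} yields the three equalities. For the first line I additionally use that $\Cov_{D}$ is translation invariant, so that $\Cov_{D}(A,B)=\Cov_{D}(A_{0},B_{0})$ and the right-hand side is unambiguous.

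Finally, for the middle line I must also justify $\qCov^{as}_{D,f}(A_{0},B_{0})=\qCov^{as}_{D,f}(A,B)$. This holds because a commutator is unchanged by subtracting a scalar multiple of $I$, namely $\ci\sz{D,A_{0}}=\ci\sz{D,A}$, so the asymmetric covariance depends on $A$ only through $A_{0}$; equivalently, in the eigenbasis the coefficient $\frac{(\lambda_{k}-\lambda_{l})^{2}}{m_{f}(\lambda_{k},\lambda_{l})}$ vanishes on the diagonal $k=l$, which is precisely where $A$ and $A_{0}$ differ, so the two sums coincide term by term.

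I do not expect any step to present a genuine obstacle; this is the routine combination of Theorems \ref{th:covariances} and \ref{th:g-metric} promised before the statement. The only points demanding care are the symmetry identity $xf\gz{\frac{y}{x}}=m_{f}(x,y)$ used to cast $g^{as}_{f}$ and $g^{s}_{f}$ into their $m_{f}$-denominated form, and keeping the bookkeeping between $A,B$ and $A_{0},B_{0}$ consistent, which is controlled entirely by the vanishing of the relevant coefficients on the diagonal together with the translation invariance of $\Cov_{D}$.
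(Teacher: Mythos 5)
Your proof is correct and follows essentially the same route the paper intends: the Corollary is stated as "a simple combination of Theorems \ref{th:covariances} and \ref{th:g-metric}," and your argument is exactly that combination, with the symmetry $xf\gz{\frac{y}{x}}=m_{f}(x,y)$ identifying the coefficients and the vanishing of $(\lambda_{k}-\lambda_{l})^{2}$ on the diagonal handling the $A$ versus $A_{0}$ bookkeeping in the middle line. The preliminary reduction to diagonal $D$ and the explicit translation-invariance remarks are careful additions the paper leaves implicit, but they do not change the approach.
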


\section{Uncertainty principles}

For a fixed density matrix $D\in\MNN$, function $f\in\Fop$ 
  and an $N$-tuple of nonzero matrices $(A^{(k)})_{k=1,\dots,N}\in\TMN$
  we define the $N\times N$ matrices $\Cov_{D}$, 
  $\Cov_{D}^f$, $\qCov^{as}_{D,f}$ and $\qCov^{s}_{D,f}$ with entries
\begin{align*}
\sz{\Cov_{D}}_{ij}&=\Cov_{D}(A^{(i)}_{0},A^{(j)}_{0})\\
\sz{\Cov_{D}^f}_{ij}&=\Cov_{D}^f(A^{(i)}_{0},A^{(j)}_{0})\\
\sz{\qCov^{as}_{D,f}}_{ij}&
  =\qCov^{as}_{D,f}(A^{(i)}_{0},A^{(j)}_{0})\\
\sz{\qCov^{s}_{D,f}}_{ij}&
  =\qCov^{s}_{D,f}(A^{(i)}_{0},A^{(j)}_{0}).
\end{align*}
Using the inequality \eqref{eq:mainineq} in Theorem \ref{th:mainineq}
  one can compare the different covariances.
  
\begin{theorem}\label{thm:unc}
If $f_{1},f_{2}$ are functions symmetric and normalized operator
  monotone functions for which
\begin{equation*}
\frac{f_{1}(0)}{f_{1}(t)}\geq \frac{f_{2}(0)}{f_{2}(t)}
 \qquad \forall t\in\Rp
\end{equation*}
  holds, then we have
\begin{align*}
\det(\Cov_{D})
  &\geq \det(\qCov^{s}_{D,f_{k}})
  \geq \det(\qCov^{as}_{D,f_{k}})  \qquad k=1,2\\
\det(\qCov^{as}_{D,f_{1}})&\geq \det(\qCov^{as}_{D,f_{2}})\\
\det(\qCov^{s}_{D,f_{1}})&\geq \det(\qCov^{as}_{D,f_{2}}).
\end{align*}
\end{theorem}
\begin{proof}
If for functions $f_{1},f_{2}\in\Fop$ 
\begin{equation*}
\frac{f_{1}(0)}{f_{1}(t)}\geq \frac{f_{2}(0)}{f_{2}(t)}
 \qquad \forall t\in\Rp
\end{equation*}
  holds, then we have the following inequalities for all $x,y\in\Rp$.
\begin{align*}
g_{cl}(x,y)
  &\geq g_{f_{k}}^{s}(x,y)\geq g_{f_{k}}^{as}(x,y)  \qquad k=1,2\\
g_{f_{1}}^{as}(x,y)&\geq g_{f_{2}}^{as}(x,y)\\
g_{f_{1}}^{s}(x,y)&\geq g_{f_{2}}^{s}(x,y)  
\end{align*}
Applying Theorem \ref{th:mainineq} for these inequalities we have the
  inequalities for different covariance matrices stated in the Theorem.
\end{proof}

\begin{corollary}
Using the same notation as in the previous Theorem for any operator monotone
  function $f\in\Fop$ we have
\begin{equation*}
\det(\Cov_{D})\geq \det(\qCov^{s}_{D,f})\geq \det(\qCov^{as}_{D,f}).
\end{equation*}
\end{corollary}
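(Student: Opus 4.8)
The plan is to read off this Corollary as an immediate specialization of the preceding Theorem, by letting the two operator monotone functions coincide. Concretely, I would put $f_{1}=f_{2}=f$ in the hypothesis of that Theorem. The required comparison $\frac{f_{1}(0)}{f_{1}(t)}\geq\frac{f_{2}(0)}{f_{2}(t)}$ for all $t\in\Rp$ then degenerates to an equality, hence is satisfied trivially, so the full conclusion of the preceding Theorem becomes available for this single $f$ without any extra assumption on $f$.

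Reading off the relevant lines of that conclusion with $k=1$ yields at once the two inequalities $\det(\Cov_{D})\geq\det(\qCov^{s}_{D,f})$ and $\det(\qCov^{s}_{D,f})\geq\det(\qCov^{as}_{D,f})$, for the given $N$-tuple $(A^{(k)})$ and the matrices assembled from the normalised observables $A^{(k)}_{0}$. Chaining these two inequalities produces exactly $\det(\Cov_{D})\geq\det(\qCov^{s}_{D,f})\geq\det(\qCov^{as}_{D,f})$, which is the assertion. Thus the whole argument is essentially a one-line reduction once the degeneracy of the hypothesis is noted, and this is the route I would present.

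If instead one wanted a self-contained derivation that does not quote the full strength of the preceding Theorem, I would verify directly the two pointwise inequalities between Cencov--Morozova functions, $g_{cl}(x,y)\geq g^{s}_{f}(x,y)\geq g^{as}_{f}(x,y)$, and feed them into Theorem \ref{th:mainineq}. The right-hand inequality is elementary: $g^{s}_{f}$ and $g^{as}_{f}$ differ only through the factors $(x+y)^{2}$ versus $(x-y)^{2}$, and $(x+y)^{2}\geq(x-y)^{2}$ on $\Rp\times\Rp$, so $g^{s}_{f}\geq g^{as}_{f}$ holds for every $f\in\Fop$. The left-hand inequality $g_{cl}\geq g^{s}_{f}$ unfolds, after the substitution $t=y/x$ and the identity $m_{f}(x,y)=xf(y/x)$, into the scalar inequality $f(t)\geq f(0)(1+t)$ for all $t\in\Rp$, which is precisely the statement that the arithmetic-mean (SLD) metric dominates $\qCov^{s}_{D,f}$; the equality case $f(x)=\frac{1+x}{2}$ recovers the remark $\Cov_{D}(A_{0},B_{0})=\qCov^{s}_{D,f_{0}}(A_{0},B_{0})$.

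The only genuine content sits in that scalar inequality $f(t)\geq f(0)(1+t)$; everything else is bookkeeping, so I expect this to be the single point requiring care. I would establish it from the standard properties of $f\in\Fop$: concavity of operator monotone functions together with the symmetry $f(t)=tf(1/t)$, observing that the comparison function $f(0)(1+t)$ enjoys the same symmetry, agrees with $f$ in the limit $t\to0$, and satisfies $2f(0)\leq1=f(1)$ at $t=1$. Since the preceding Theorem already absorbs this inequality, the first, specialization-based route is the most economical and avoids re-proving it.
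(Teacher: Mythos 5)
Your primary argument is correct and is exactly the paper's (implicit) route: the Corollary is stated without proof as the specialization $f_{1}=f_{2}=f$ of the preceding theorem, whose hypothesis becomes a trivial equality, and the lines $\det(\Cov_{D})\geq\det(\qCov^{s}_{D,f})$ and $\det(\qCov^{s}_{D,f})\geq\det(\qCov^{as}_{D,f})$ chain to give the claim. Your alternative self-contained sketch is also sound in substance (the key scalar inequality $f(t)\geq f(0)(1+t)$ follows from concavity of $f$ together with $f(t)/t\to f(0)$ as $t\to\infty$), but it is not needed here.
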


The dynamical uncertainty relation mentioned in the Introduction 
  (see Equation \eqref{eq:GibiliscoConjecture}) in these terms is
\begin{equation*}
\det(\Cov_{D})\geq\det(\qCov^{as}_{D,f}).
\end{equation*}

It is surprising that for a given $f\in\Fop$ the gap between
  symmetric and antisymmetric \mbox{$f$-covariances}
  is so large that it can be estimated below by
  the quantum $f_{RLD}$-covariance.

\begin{theorem}
For a density matrix $D\in\MNN$ and function $f\in\Fop$ we have
 \begin{equation*}
  2f(0) \Cov_D^{f_{RLD}}(A_0,B_0)
  \le
  \qCov_{D,f}^s(A_0,B_0)-\qCov_{D,f}^{as}(A_0,B_0)
  \le
  \Cov_D^{f_{RLD}}(A_0,B_0)
 \end{equation*}
  and
 \begin{equation*}
  \det(\qCov^{s}_{D,f})-\det(\qCov^{as}_{D,f})
  \ge
  (2f(0))^N  \det(\Cov_D^{f_{RLD}}).
 \end{equation*}
\end{theorem}
\begin{proof}
Consider the involution 
\begin{equation*}
 \#:\Fop\to\Fop\qquad f(x)\mapsto f^\#(x)=\frac{x}{f(x)}
\end{equation*}
  cf. Definition 2.5 in \cite{Aud}. 
For any function $g\in\Fop$ one has $f_{RLD}(x)=\frac{2x}{1+x}\le g(x)$ 
  thus we have
\begin{equation*}
f_s(x)-f_{as}(x)=\frac{2f(0)x}{f(x)}=2f(0)f^\#(x)\ge 2f(0) f_{RLD}(x)
\end{equation*}
  which implies the lower bound in the first inequality.
According to the remark after Theorem \ref{thm:FopRN}
\begin{equation*}
\frac{2f(0)x}{f(x)}\le\frac{2x}{1+x}
\end{equation*}
  which gives the upper bound in the first inequality.

By using Inequality \eqref{eq:mainineq} twice, 
  once
  for $g_1(x,y) = g_f^{s}(x,y)$ 
  and $g_2(x,y) = g_f^{as}(x,y)$
  and 
  once for $g_1(x,y) = g_f^{s}(x,y)-g_f^{as}(x,y)$
  and $g_2(x,y) = 2f(0)m_{f_{RLD}}(x,y)$
  we get
 \begin{equation*}
  \det(\qCov^{s}_{D,f})-\det(\qCov^{as}_{D,f})
  \ge
  \det(\qCov^{s}_{D,f}-\qCov^{as}_{D,f})
  \ge
  (2f(0))^N  \det(\Cov_D^{f_{RLD}}).
 \end{equation*}
which is the desired second inequality.
\end{proof}

\section{The global upper bound}

It is natural to ask, if there exists an operator monotone 
  function $f\in\Fop$ for which
\begin{equation*}
\det (\qCov_{D,f}^s)\ge \det (\qCov_{D,g}^{as})
\end{equation*}
  holds for any $g\in\Fop$. 
In this Section we show the existence of such $f$ function and we 
  characterise these functions and we present the optimal one.

According to Theorem \ref{thm:unc} it is enough to find a
  function $f\in\Fop^r$ for which
\begin{equation}
\label{eq:globupperboundineq}
 \frac{f(0)(1+x)^2}{2f(x)}
 \ge
 \frac{(1-x)^2}{2(1+x)}
\end{equation}
  holds. 
This condition can be reformulated by using the remark after 
  Theorem \ref{thm:FopRN} as
\begin{equation}\label{eq:cond}
 \tilde{f}(x)
 \le
 \frac{2x}{1+x}
 \sz{1+\gz{\frac{1-x}{1+x}}^2}.
 \end{equation}

For proving the existence of such function we need some
  more sophisticated tool. 
Next theorem establishes a canonical bijection between 
  elements in $\Fop$ and probability measures on $\sz{0,1}$ 
  (see \cite{Aud,Han}).

\begin{theorem}
For a given $g\in\Fop$ there exists a unique probability
  measure on $\sz{0,1}$ such that for all $x\in\Rp$
 \begin{equation*}
 \frac{1}{g(x)}=
 \int\limits_{\sz{0,1}}
 \frac{1+t}{2} \gz{\frac{1}{x+t} + \frac{1}{1+tx}}
 \dint\mu(t).
 \end{equation*}
\end{theorem}

\begin{lemma}
If $\mu$ denotes the probability measure on $\sz{0,1}$ associated 
  to the function $g\in\Fop$ and $\mu (\kz{0})<\frac{1}{2}$, 
  then there exists $x>0$ such that
 \begin{equation*}
  \frac{1}{g(x)}
  <\frac{1+x}{2x}\frac{1}{1+\gz{\frac{1-x}{1+x}}^2}
 \end{equation*}
  holds.
\end{lemma}
\begin{proof}
Choose $\varepsilon > 0$ so small that 
  $\mu(\sz{0,\varepsilon})<\frac{1}{2}$ and consider the 
  following estimation
\begin{equation*}
\frac{1}{g(x)}=
 \int\limits_{\sz{0,1}}
 \frac{1+t}{2}\gz{\frac{1}{x+t}+\frac{1}{1+tx}} \dint\mu(t)
 \le
 \frac{1+x}{2x}\mu(\sz{0,\varepsilon})
 +\mu(\left]\varepsilon,1\right])\frac{(1+\varepsilon)^2}{2\varepsilon}.
\end{equation*}
Since the function
\begin{equation*}
\Rp\to\Rp\qquad
  x\mapsto\frac{1+x}{2x}\frac{1}{1+\gz{\frac{1-x}{1+x}}^2}
\end{equation*} 
  is bounded below by $\frac{1}{2}$, which implies
\begin{equation*}
 \frac{1+x}{2x}\mu(\sz{0,\varepsilon})
 +\mu(\left]\varepsilon,1\right])\frac{(1+\varepsilon)^2}{2\varepsilon}
 <\frac{1+x}{2x}\frac{1}{1+\gz{\frac{1-x}{1+x}}^2}
\end{equation*}
for $x>0$ small enough.
\end{proof}

Let $\mu$ be a probability measure on $\sz{0,1}$ defined by
  $\mu(\kz{0})=\mu(\kz{1})=\frac{1}{2}$.
For the function $g\in\Fop$ corresponding to $\mu$ we have 
\begin{equation*}
\frac{1}{g(x)}=\frac{1}{2}\gz{\frac{x+1}{2x}+\frac{2}{x+1}}.
\end{equation*}
For every $x\in\Rp$ we have
\begin{equation*}
\frac{2x}{1+x}\sz{1+\gz{\frac{1-x}{1+x}}^2}-g(x)
  =\frac{(1-x)^2}{(1+x)(1+x^2)}>0,
\end{equation*}
  which means that $g(x)$ is a good candidate for 
  $\tilde{f}$ in \eqref{eq:cond} and by the inversion 
  formula (See Proposition 6.1. of the paper \cite{GibHanIso}) we have that
\begin{equation*}
f(x) =\frac{1}{2}\gz{\frac{1+x}{2}+\frac{2x}{1+x}}
\end{equation*}
  is a good choice for $f$. 

On the other hand $f$ is the optimal one because of the 
  construction of $\mu$.
We summarise the results in the following theorem.

\begin{theorem}
For every function $g\in\Fop$
\begin{equation*}
\det (\qCov_{D,f}^s)\ge\det (\qCov_{D,g}^{as})
\end{equation*}
  holds, where
\begin{equation*}
f_{opt}=\frac{1}{2}\gz{\frac{1+x}{2}+\frac{2x}{1+x}}.
\end{equation*}
\end{theorem}

Note that the difference between the left-hand side and the 
  right-hand side of Equation (\ref{eq:globupperboundineq}) is
\begin{equation*}
q(x)=\frac{f_{opt}(0)(1+x)^{2}}{2f_{opt}(x)}-\frac{(1-x)^{2}}{2(1+x)}
  =\frac{8x^{2}}{x^{3}+7x^{2}+7x+1}
\end{equation*}
  and we have $q'(0)=0$ therefore it cannot be estimated below
  by any operator monotone function.

\end{document}